\def\b1{{1\!\!1}}
\def\cC{{\ca C}}
\def\cG{{\ca G}}
\def\cL{{\ca L}}
\def\cM{{\ca M}}
\def\cV{{\ca V}}
\def\cO{{\ca O}}
\def\cR{{\ca R}}
\def\cS{{\ca S}}
\def\sH{{\mathsf H}}
\def\sP{{\mathsf P}}
\def\bN{{\mathbb N}}
\def\bR{{\mathbb R}}
\def\bS{{\mathbb S}}
\def\mL{\mathcal L}
\def\gB{{\mathfrak B}}
\def\gD{{\mathfrak D}}
\def\gu{{\mathfrak u}}
\def\beq{\begin{eqnarray}}
\def\eeq{\end{eqnarray}}
\newcommand{\ca}[1]{{\cal #1}}         
\newtheoremstyle{thm}
{12pt}
{12pt}
{\itshape}
{}
{\itshape\bfseries}
{}
{1em}
{}
\theoremstyle{thm}
\newtheorem{theorem}{Theorem}
\newtheorem{proposition}[theorem]{Proposition}
\newtheorem{definition}[theorem]{Definition}
\newtheorem{corollary}[theorem]{Corollary}
\title{\Large A geometric approach to quantum control\\ in projective Hilbert spaces}
\author{ Davide Pastorello\\\normalsize Department of  Mathematics, University of Trento,\\
Istituto Nazionale di Fisica Nucleare (TIFPA)\\
 \normalsize via Sommarive 14, 38123 Povo (Trento), Italy.}
\date{}
\begin{document}



\maketitle
\begin{abstract}
\noindent A quantum theory in a finite-dimensional Hilbert space can be geometrically  formulated as a proper Hamiltonian theory as explained in \cite{AS,BH,DV2,D1}. From this point of view a quantum system can be described in a classical-like framework where quantum dynamics is represented by a Hamiltonian  flow in the phase space given by projective Hilbert space. This  paper is devoted to investigate how the notion of \emph{accessibility algebra} from classical control theory can be applied within geometric Hamiltonian formulation of Quanum Mechanics to study controllability of a quantum system. 
A new characterization of quantum controllability in terms of Killing vector fields w.r.t. Fubini-Study metric on projective space  is also discussed.

\end{abstract}

\vspace{0.5cm}

\small{Keywords: Geometric quantum mechanics, quantum control theory, dynamical Lie algebra.}


\vspace{0.5cm}	
\section{Introduction}

\noindent
In Quantum Control Theory a fundamental issue is understanding if a quantum system can be driven from an initial state to a selected finale state, i.e. if it is controllable (some notions of quantum controllability are sketched below). Quantum Control obviously requires taking into account aspects like entanglement, coherence, unitary evolution, dynamics of open systems that are quantum features without a counterpart in Classical Mechanics. However adopting a geometric point of view, a $n$-level quantum system can be described as a Hamiltonian system in a symplectic manifold in analogy to a classical system. Since Classical Control Theory is rich of tools within a geometric formulation we can conjecture that some of these  tools can be directly applied to study quantum controllability. 
\\
The main goal of this paper is investigate how quantum controllability can be characterized applying the machinery of classical control theory within geometric formulation of Quantum Mechanics. In particular theorem \ref{T9} shows that complete controllability of a quantum system can be tested checking classical \emph{accessibility rank condition}. Another aim of the work is investigate how geometric structure of Hilbert projective space can be used to study quantum control, in particular we show that controllability of a quantum system can be characterized in terms of Killing vector fields w.r.t. riemannian metric defined on quantum phase space.
\\
In section 3, we summarize geometric Hamiltonian formulation of finite-dimensional QM on projective space construced out from the Hilbert space of the considered quantum theory. In particular a general prescription to associate a phase space function (classical-like observable) to every quantum observable is stated. Within the symplectic structure of the projective space, classical-like observables define Hamiltonian vector fields. In the present paper we set up a description of quantum controllability in terms of Hamiltonian vector fields as in Classical Mechanics.

\noindent   
Let us start introducing some fundamental concepts to investigate if the dynamics of a quantum physical system can be controlled acting on external controls which are described by coupling terms in the system Hamiltonian operator. 
\\
The bilinear model of \emph{controllability problem} for a multilevel quantum system described in a $n$-dimensional Hilbert space $\sH$ is given by:
\beq\label{qc}
i\hbar\frac{d}{d t}|\psi(t)\rangle=\left[H_0+\sum_{i=1}^m H_i u_i(t)\right] |\psi(t)\rangle.
\eeq
(\ref{qc}) is the Schr\"odinger equation defined by  a total Hamiltonian made by an internal part $H_0$ and  coupling terms with control functions $u_i=u_i(t)$ for $i=1,...,m$. The ket $|\psi\rangle$ is a normalized wavefunction, so a curve $t\mapsto |\psi(t)\rangle$ on the unit sphere $\bS(\sH):=\{|\psi\rangle\in\sH|\langle\psi|\psi\rangle=1\}$ satisfying (\ref{qc}) represents the time evolution of a pure state.

\begin{definition}\label{defpsc}
The quantum system (\ref{qc}) is {\bf pure state controllable} if for every pair $|\psi_0\rangle$, $|\psi_1\rangle\in\bS(\sH)$ there exist control functions $u_1,...,u_m$ and a time $T>0$ such that 
the solution of (\ref{qc}), with initial condition $|\psi(0)\rangle=|\psi_0\rangle$, at time $T$ is $|\psi(T)\rangle=|\psi_1\rangle$.
\end{definition}
\noindent
Since state vectors differing by a phase factor are physically indistinguishable i.e. the set of the pure states is isomorphic to the space of rays in $\sH$, thus the above definition should consider the quotient of the sphere $\bS(\sH)$ w.r.t. the group $U(1)$, i.e. the projective space on $\sH$. However pure state controllability is equivalent to the controllability in terms of rays (sometimes called \emph{equivalent state controllability}), as proved in \cite{AD}.  
\\
According to standard formulation of QM,  solution at time $t$ of (\ref{qc}) with initial condition $|\psi_0\rangle$ is given by the unitary evolution (assuming the considered system is not interacting with enviroment):
\beq
|\psi(t)\rangle=U(t)|\psi_0\rangle,
\eeq
where $U(t)$ belongs to the unitary group $U(n)$ and it is given by the solution at time $t$ of:

\beq\label{oqc}
i\hbar\frac{d}{d t}U(t)=\left[H_0+\sum_{i=1}^m H_i u_i(t)\right]U(t),
\eeq
\\
with initial condition $U(0)=I_n$.
Another notion of controllability of physical interest is the so called \emph{operator controllability} or \emph{complete controllability} \cite{AD,DD,SS}.
\begin{definition}
The quantum system (\ref{qc}) is \textbf{\bf operator controllable} or \textbf{\bf completely controllable} if every unitary operation on the initial state can be performed by an appropritae choice of the controls $u_1,...,u_m$. In other words, for every $U_f\in U(n)$ there exist controls $u_1,...,u_m$ and $T>0$ such that the solution of (\ref{oqc}), with initial condition $U(0)=I_n$, satisfies  $U(T)=U_f$.

\end{definition}

\noindent
Let $\gB(\sH)$ be the linear Lie algebra of operators in the Hilbert space $\sH$ and $\{A_1,...,A_m\}$ be a finite set of operators in $\gB(\sH)$, the smallest subalgebra of $\gB(\sH)$ containing $A_1,...A_m$ is called \textbf{algebra generated by} $\{A_1,...A_m\}$.  A sufficient and necessary condition for operator controllability of a quantum system is given by the following result \cite{AD,SS}:

\begin{theorem}\label{qcontrol}
The system (\ref{qc}) is operator controllable if and only if the Lie algebra $\mL$ generated by $\{-iH_0,-iH_1,...,-iH_m\}$ is $\gu(n)$, the Lie algebra of the unitary group $U(n)$.
\end{theorem}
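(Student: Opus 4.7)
The plan is to relate the set of unitaries reachable from $I_n$ via admissible controls to the connected Lie subgroup of $U(n)$ whose Lie algebra is $\mL$, and then to exploit compactness of $U(n)$ together with the hypothesis $\mL = \gu(n)$ to conclude. For the necessity direction I would first note that along any solution of \nref{oqc} the right-invariant tangent $\dot U(t) U(t)^{-1} = -\frac{i}{\hbar}\bigl(H_0 + \sum_k u_k(t) H_k\bigr)$ belongs to $\mL$ at every instant. A standard integration argument (approximating by piecewise constant controls and taking limits) shows that the trajectory stays inside the connected Lie subgroup $G(\mL) \subseteq U(n)$ whose Lie algebra is $\mL$. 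Complete controllability then forces $G(\mL) = U(n)$, which in turn gives $\mL = \gu(n)$.

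For the sufficiency direction I would let $\mathcal R$ denote the reachable set and establish in order: (i) $\mathcal R$ is closed under composition, since concatenating two control strategies yields the product unitary; (ii) with constant controls $u_j \equiv c$ one realises $\exp(-it(H_0 + c H_j)/\hbar)$, and by the time-rescaling limit $c \to \infty$, $t \to 0$ with $ct$ fixed, each individual flow $\exp(-is H_j/\hbar)$ belongs to the closure $\overline{\mathcal R}$; (iii) compactness of $U(n)$ upgrades the semigroup $\overline{\mathcal R}$ to a subgroup, because iterates $\{V^n\}$ of any $V \in \overline{\mathcal R}$ admit a subsequence converging to $I_n$, which forces $V^{-1} \in \overline{\mathcal R}$. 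The commutator identity
\[
e^{tA}\, e^{tB}\, e^{-tA}\, e^{-tB} \;=\; \exp\bigl(t^2[A,B] + O(t^3)\bigr),
\]
iterated over the generators $-iH_0, \ldots, -iH_m$, then shows that $\exp(\mL) \subseteq \overline{\mathcal R}$, and combined with $\mL = \gu(n)$ this yields $\overline{\mathcal R} = U(n)$. A standard Chow-type density argument finally removes the closure to give $\mathcal R = U(n)$.

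The main obstacle will be step (ii): the drift $H_0$ is always present in \nref{oqc}, so isolating the individual flows $\exp(-is H_j/\hbar)$ requires either a genuine singular-perturbation/time-rescaling limit in the controls or an alternative argument based on almost-periodic recurrence inside the compact group $U(n)$. The other steps are classical Lie-group manipulations from bilinear control theory, and the direction $\mL \subseteq \gu(n)$ is automatic from the skew-Hermitian nature of the generators $-iH_j$.
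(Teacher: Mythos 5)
The paper itself does not prove this theorem: it is quoted as a known result with citations to the Albertini--D'Alessandro and Schirmer--Solomon papers, so there is no internal proof to compare against. What you have written is essentially the classical Jurdjevic--Sussmann argument for right-invariant systems on a compact Lie group, which is exactly the argument underlying those references, so your route is the "standard" one rather than a genuinely new one. Your necessity direction is sound: the trajectory of \nref{oqc} is an integral curve of a right-invariant distribution with values in $\mL$, hence stays in the connected subgroup $G(\mL)$, and surjectivity of the endpoint map forces $G(\mL)=U(n)$, i.e.\ $\mL=\gu(n)$. On sufficiency, steps (i)--(iii) are correct: since the controls in \nref{qc} are unbounded, the rescaling $c\to\infty$, $t=s/c$ gives $\exp(-i(s/c)H_0-isH_j)\to\exp(-isH_j)$ for either sign of $s$, so the obstacle you worry about in step (ii) is not really there (boundedness of controls is not assumed); and the compactness trick $V^{n_k}\to I_n$, hence $V^{n_k-1}\to V^{-1}$, correctly turns the closed subsemigroup $\overline{\mathcal R}$ into a closed subgroup, which also disposes of the fact that the drift can only be run forward in time.

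The one place that needs tightening is the last sentence, ``a standard Chow-type density argument removes the closure.'' Chow--Rashevskii applies to symmetric (driftless) families and by itself does not give what you need here; the correct ingredient is Krener's accessibility theorem, which is valid in the presence of drift and guarantees that the reachable set $\mathcal R$ itself (not just its closure) has nonempty interior in $U(n)$ once the generators span $\gu(n)$. Given that, the conclusion follows from the semigroup property plus density: for $g\in U(n)=\overline{\mathcal R}$ and $V=\mathrm{int}\,\mathcal R\neq\emptyset$, the open set $gV^{-1}$ meets $\mathcal R$, so $g=a\,v$ with $a\in\mathcal R$, $v\in V\subset\mathcal R$, hence $g\in\mathcal R$. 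With that substitution your outline is a complete and correct proof, matching the one in the cited literature.
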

\noindent
The Lie algebra $\mL$ is tipically called \textbf{dynamical algebra} of the system (\ref{qc}).
The goal of the present paper is to show that a quantum system is complete controllable if and only if it satisfies the \emph{classical accessibility rank condition} (theorem \ref{ccontrol}) within the classical-like geometric Hamiltonian formulation of QM that is introduced in section \ref{geomQM}.

\vspace{0.0cm}

\section{Controllability and local accessibility of a classical system}

In this section let us recall some fundamental concepts about classical control theory within Hamiltonian formalism, we consider a classical system coupled with a set of external controls. Under certain conditions a suitable choice of these controls can allow to drive the system in the phase space along preferred trajectories and such a control can be a local or a global property. 
The general \emph{controllability problem} for a classical non-linear system in the affine form can be stated as follows:
\beq\label{cc}
\dot x (t)=X_0(x(t))+\sum_{i=1}^m X_i(x(t)) u_i(t), 
\eeq
where $[0,+\infty)\ni t\mapsto x(t)$ is a curve in the $n$-dimensional smooth symplectic manifold $\cM$ (the phase space, $n$ is the total number of degrees of freedom of the system) describing time evolution of the classical system, $X_0,X_1,...,X_m$ are smooth vector fields on $\cM$ and $u_1,...,u_m$ are real valued functions called \emph{control functions} or simply \emph{controls}. In (\ref{cc}) there are the Hamilton equations of motion defined by the Hamiltonian function 
$$H_0+\sum_{i=1}^m H_i u_i(t).$$
More precisely: Let $\omega$ be the symplectic form defined on $\cM$, $X_i$ is the unique vector field satisfying $\omega(X_i,Y)=dH_i(Y)$ for every vector field $Y$ on $\cM$, i.e. $X_i$ is the Hamiltonian vector field associated to the smooth function $H_i:\cM\rightarrow \bR$ (for any $i=1,...,m$).
\\
Control on the system is realized by the set of  controls $\{u_1,...,u_m\}$ coupled with the system via time-independent Hamiltonians $H_i$. The function $H_0$ represents the internal energy of the system and its Hamiltonian vector field $X_0$ is sometimes called \emph{drift vector field}.   
Generally speaking the goal of a control problem is finding a set of controls $\{u_1,...,u_m\}$ in order to drive the system from an initial state $x(0)\in\cM$ to a final state $x(T)\in\cM$, in a finite time $T>0$. In the following we refer to (\ref{cc}) as the \emph{control system} or simply as the \emph{system}.
\begin{definition}
The system (\ref{cc}) is said to be {\bf controllable} if for any two points $x_1,x_2\in\cM$ there exists a finite time $T$ and controls $u_1,...,u_m$ such that the solution $x=x(t)$ of (\ref{cc}) with initial condition $x(0)=x_1$ satisfies $x(T)=x_2$.
\end{definition}

\noindent
Hence if a system is controllable then any physically permitted evolution can be achieved by a controller which acts on controls.\\
Let $\cR^V(x_0,T)$ be the subset of $\cM$ made by points that are reachble from $x_0$ at time $T$ following trajectories contained in the neighborhood $V$ of $x_0$, i.e. being $x=x(t)$ the solution of (\ref{cc}) with initial condition $x(0)=x_0$ we can define:  

\beq
\cR^V(x_0,T):=\{\hat x\in\cM|\exists u_1,...,u_m\,\,\mbox{s.t.}\,\, x(t)\in V,0\leq t\leq T, x(T)=\hat x\}.
\eeq
\\
Defining the reachable set $\cR^V_T(x_0):=\bigcup_{\tau\leq T} \cR^V(x_0,\tau)$, we can state the following definition.
\begin{definition}
The control system (\ref{cc}) is said to be {\bf locally accessible from $x_0\in\cM$} if $\cR_T^V(x_0)$ contains a non-empty open set of $\cM$ for all  $V$ and $T>0$. If such property holds for all $x_0\in\cM$ then the system is said to be {\bf locally accessible}.
\end{definition}

\noindent
Thus system is controllable if $\bigcup_{T>0} \cR^\cM_T(x_0)=\cM$ for all $x_0$.
A useful tool to study controllability and local accessibility of a system is given by a particular subalgebra of the Lie algebra $V^{\infty}(\cM)$ of smooth vector fields on $\cM$.

\begin{definition}
The {\bf accessibility algebra} $\cC$ of the system (\ref{cc}) is the subalgebra of $V^{\infty}(\cM)$ generated by $X_0,X_1,...,X_m$.
\end{definition}
\noindent
So $\cC$ is the smallest subalgebra of $V^{\infty}(\cM)$ containing vector fields $X_0,X_1,...,X_m$. Let us introduce a linear subspace of the tangent space $T_x\cM$ in $x\in\cM$ associated to $\cC$ which defines the so called \emph{accessibility distribution}:
\beq
\cC(x):=span\{X(x)|X\in\cC  \}
\eeq
The following result is the celebrated \emph{accessibility rank condition} \cite{book}:
\begin{theorem}\label{ccontrol}
Consider a classical control system (\ref{cc}). If $\dim \cC(x)=n$ for every $x\in\cM$ then the system is locally accessible.
\end{theorem}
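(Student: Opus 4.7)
The plan is to exhibit, for each open neighborhood $V$ of $x_0$ and each $T>0$, a smooth map $F\colon U \to \cR_T^V(x_0)$ from an open $U\subset \bR^n$ whose differential at the origin has rank $n$; the inverse function theorem will then yield an open set in $\cR_T^V(x_0)$ and thus local accessibility at $x_0$.

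Since $\dim\cC(x_0)=n$, first I would select vector fields $Y_1,\ldots,Y_n\in\cC$ such that $\{Y_j(x_0)\}_{j=1}^n$ is a basis of $T_{x_0}\cM$. Each $Y_j$ is, by construction of $\cC$, a finite iterated Lie bracket of the generators $X_0,X_1,\ldots,X_m$. The standard flow commutator identity
\[
\phi^{[X,Y]}_s(x)\;=\;\phi^Y_{-\sqrt s}\circ\phi^X_{-\sqrt s}\circ\phi^Y_{\sqrt s}\circ\phi^X_{\sqrt s}(x)\;+\;O(s^{3/2}),
\]
applied recursively, would let me realize the flow of each $Y_j$, up to higher order in a parameter $s_j>0$, as a finite composition of flows of the system vector fields $X_0+\sum_i u_i X_i$ driven by piecewise constant controls $u_i$. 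Call this composite map $\Phi_j(s_j,\cdot)$.

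Next I would define
\[
F(s_1,\ldots,s_n) \;:=\; \Phi_n(s_n,\cdot)\circ\cdots\circ\Phi_1(s_1,x_0),
\]
which by construction is the endpoint, at some time $\tau(s_1,\ldots,s_n)$, of a trajectory of (\ref{cc}) driven by piecewise constant controls. After a change of variables $s_j\mapsto s_j^{\alp_j}$ with exponents dictated by the bracket-depth of each $Y_j$, a Taylor expansion gives $\pa F/\pa s_j\big|_{s=0}=Y_j(x_0)$, so $\mathrm{rank}\,dF_0=n$. Choosing $|s_j|$ sufficiently small ensures both $\tau(s)<T$ and that the whole trajectory stays inside $V$; hence the image of $F$ lies in $\cR_T^V(x_0)$, and the inverse function theorem produces the desired open neighborhood of $x_0$.

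The main technical obstacle will be the inductive bookkeeping needed to build the approximants $\Phi_j$ for brackets of arbitrary depth and to match the time-parameter scalings so that $dF_0$ coincides exactly with the basis $(Y_1(x_0),\ldots,Y_n(x_0))$, rather than a degenerate linear combination of generators. A secondary subtlety is that the drift vector field $X_0$ cannot be flowed backward in time, so the negative-time slots appearing in the commutator identity whenever $X=X_0$ have to be realized by cancellation via large-amplitude controls, or by a more refined needle-variation argument; in either case the full rank of $dF_0$ is preserved and the conclusion follows.
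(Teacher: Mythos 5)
Your strategy, as stated, would prove more than the theorem claims, and the stronger statement is false. Since $F(0)=x_0$ and you apply the inverse function theorem at the origin, your conclusion would be that $x_0$ itself lies in the interior of $\cR^V_T(x_0)$, i.e.\ small-time local controllability. Under the rank hypothesis alone this fails in the presence of drift: take $\cM=\bR^2$, $\dot x_1=u$, $\dot x_2=x_1^2$, so $X_0=x_1^2\,\partial_{x_2}$, $X_1=\partial_{x_1}$, and $[X_1,[X_1,X_0]]=2\,\partial_{x_2}$; the accessibility rank condition holds everywhere, yet $x_2$ is nondecreasing along every admissible trajectory, so no neighborhood of the origin is reachable from the origin. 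This shows that the issue you dismiss as a ``secondary subtlety'' is in fact the fatal obstruction: the commutator-of-flows identity requires the flow of $-X_0$, and the backward drift flow is neither admissible nor approximable by admissible motions, because every admissible velocity has the form $X_0+\sum_i u_iX_i$ --- large controls can cancel or reverse the $X_i$ directions, never the drift. Consequently the theorem only asserts that $\cR^V_T(x_0)$ contains \emph{some} nonempty open set, not one containing $x_0$, and any proof must be compatible with that weaker conclusion. A further, independent difficulty is analytic: your map $F$ is defined only for $s_j\ge 0$ and carries error terms $O(s^{3/2})$ that are not $C^1$ at $s=0$, so the classical inverse function theorem cannot be invoked at that corner point even in the driftless case.

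For context, the paper does not prove this statement; it quotes it from the control-theory literature (Nijmeijer--van der Schaft), where the standard argument is Krener's nested-submanifold construction, which avoids both obstacles. One composes only forward-time flows $\phi^{Z_k}_{t_k}\circ\cdots\circ\phi^{Z_1}_{t_1}(x_0)$ of vector fields $Z_j=X_0+\sum_i u_i^{j}X_i$ with constant controls and strictly positive small times. Inductively, if such a composition has rank $k<n$ on its (open) parameter domain, its image is a $k$-dimensional immersed submanifold; if every admissible $Z$ were tangent to it along it, then all iterated brackets, hence all of $\cC$, would be tangent, contradicting $\dim\cC(x)=n$. So some $Z_{k+1}$ is non-tangent at some reached point, and appending its flow raises the rank. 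After at most $n$ steps one obtains a rank-$n$ map at interior (positive-time) parameters, and its image is the required open subset of $\cR^V_T(x_0)$, with times and excursions kept below $T$ and inside $V$ by shrinking the parameter intervals. The rank hypothesis thus enters only through the tangency argument, with no bracket-flow approximation and no backward drift flow; if you want a correct proof, this is the route to follow.
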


\noindent
In other words the condition of local accessibility is equivalent to the following fact: The vector fields belonging to accessibility algebra of the system span the tangent space in every point of the manifold. 
\\
Below we apply classical accessibility rank condition to characterize the controllability of a quantum system which is described as a classical-like system in the geometric Hamiltonian framework, where $\cM$ is given by the projective space of the Hilbert space of a $n$-level system.

\section{Geometric Hamiltonian formulation of Quantum Mechanics}\label{geomQM}

Finite dimensional Quantum Mechanics can be formulated as a classical-like Hamilonian theory where the phase space is given by the projective space  $\sP(\sH)=\bS(\sH)/U(1)$ on the Hilbert space $\sH$ of considered quantum theory (e.g. \cite{AS}). $\sP(\sH)$ has a structure of ($2n-2$)-dimensional real manifold  which can be equipped with a \emph{K\"ahler structure}. In particular we introduce the symplecic form below.
Points of $\sP(\sH)$ can be represented by rank-1 orthogonal projectors on $\sH$, i.e. pure states, in fact it is well-known that there exists a homeomorphism $\sP(\sH)\rightarrow \gD_p(\sH)$ where $\gD_p(\sH)$ denotes the set of pure states endowed with the standard operator topology\footnote{Let $\gB(\sH)$ be the $C^*$-algebra of linear operators on the finite-dimensional Hilbert space $\sH$, $\gD_p(\sH)$ is defined as the set of extremal points of $\gD(H)\subset\gB(\sH)$ that is the convex set of positive operators with unit trace, i.e. density matrices (in finite dimension every operator is obviously bounded and trace-class). Standard operator topology is the topology equivalently induced by \emph{$C^*$-norm} or \emph{trace norm} or \emph{Hilbert-Schmidt norm} defined on $\gB(\sH)$.} and $\sP(\sH)$ is equipped with the quotient topology\footnote{In the quotient topology, $\sP(\sH)$ is connected and Hausdorff.}.
\\
In order to define explicitely the symplectic form on $\sP(\sH)$ let us introduce the following characterization of the tangent space $T_p\sP(\sH)$ which is made possible by the transitive action of $U(n)$ on $\sP(\sH)$.  In the following $i\gu(n)$ denotes the real vector space of self-adjoint operators in $\sH$.
\begin{proposition}\label{tang}
The tangent vectors  $v$ at $p \in \sP(\sH)$ are all the linear operators on $\sH$ of the form:
$v = -i[A_v, p]$, for some  $A_v\in i\gu(n)$.
Consequently,  $A_1,A_2 \in i\gu(n)$ 
define the same vector in $T_p \sP(\sH)$  iff $[A_1-A_2, p]=0$. 
\end{proposition}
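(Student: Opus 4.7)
The plan is to use the embedding of $\sP(\sH)$ into the real vector space $i\gu(n)$ of self-adjoint operators, realised through the identification $\sP(\sH) \cong \gD_p(\sH)$ stated just before the proposition: a point $p$ is a rank-one orthogonal projector. Under this embedding $T_p\sP(\sH)$ sits inside $i\gu(n)$ as a linear subspace, and the proposition amounts to identifying this subspace with the image of the linear map $\Phi_p:i\gu(n)\ni A\mapsto -i[A,p]$.

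First I would show that every operator in the image of $\Phi_p$ is a genuine tangent vector. Given $A\in i\gu(n)$, consider the smooth curve $\gamma_A(t):=e^{-itA}\,p\,e^{itA}$. For each $t$ this is still a rank-one orthogonal projector (unitary conjugation preserves spectrum and self-adjointness), so $\gamma_A$ takes values in $\sP(\sH)$, it passes through $p$ at $t=0$, and a direct differentiation yields $\dot\gamma_A(0)=-i[A,p]$. Hence $\Phi_p(A)\in T_p\sP(\sH)$.

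The crucial step, which I expect to be the main obstacle, is the converse: every tangent vector at $p$ has this form. Here I would exploit the transitive action of $U(n)$ on $\sP(\sH)$ by conjugation. The orbit map $\pi_p:U(n)\to \sP(\sH)$, $U\mapsto UpU^{*}$, is smooth and surjective, and its differential at the identity is exactly $\Phi_p$ (after identifying $T_I U(n)$ with the Lie algebra of anti-self-adjoint operators, equivalently with $i\gu(n)$ via multiplication by $i$). Since $\pi_p$ is surjective between connected manifolds and equivariant under the $U(n)$-action, its differential at the identity is surjective onto $T_p\sP(\sH)$; equivalently, one checks by dimension count that
\[
\dim_\RR \operatorname{Im}\Phi_p=\dim i\gu(n)-\dim\ker\Phi_p=n^{2}-\bigl(1+(n-1)^{2}\bigr)=2n-2=\dim_\RR \sP(\sH),
\]
where $\ker\Phi_p=\{A\in i\gu(n):[A,p]=0\}$ is the self-adjoint part of the commutant of the rank-one projector $p$, which is spanned by $p$ and the self-adjoint operators on the $(n-1)$-dimensional complement.

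The second assertion is then a direct reading of the kernel computation: by linearity of $\Phi_p$, two operators $A_1,A_2\in i\gu(n)$ yield the same tangent vector at $p$ if and only if $\Phi_p(A_1-A_2)=0$, i.e.\ $[A_1-A_2,p]=0$. This closes the proof plan; the only non-routine point is the surjectivity argument above, and either the submersion property of the orbit map or the explicit dimension count gives it without difficulty.
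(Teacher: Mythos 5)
Your proposal is correct and follows essentially the same route as the paper: both use the orbit map $U\mapsto UpU^{*}$ of the transitive $U(n)$-action and the fact that its differential at the identity is $A\mapsto -i[A,p]$ (equivalently $B\mapsto[B,p]$ on anti-self-adjoint $B$), the paper obtaining surjectivity by citing the submersion property of the projection onto $U(n)/\cG_p$. Your explicit curve $e^{-itA}p\,e^{itA}$ and the kernel/dimension count $n^{2}-\bigl(1+(n-1)^{2}\bigr)=2n-2$ merely make that surjectivity self-contained, which is a fine (and slightly more detailed) substitute for the reference to Warner.
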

\begin{proof}
$\sP(\sH)$ is diffeomorphic to the quotient $U(n)/\mathcal G_p$ where $\cG_p\subset U(n)$ is the isotropy subgroup of $p\in \sP(\sH)$.
Let us consider the transitive smooth action of the compact Lie group $U(n)$ on the projective space $\sP(\sH)$ (identifying points of $\sP(\sH)$ with rank-1 orthogonal projectors on $\sH$):
\beq
U(n)\times \sP(\sH) \ni (U,p)\mapsto Up\,U^{-1}\in\sP(\sH).
\eeq
The projection defined as:
$$\Pi_p:U(n)\ni U\mapsto U p\, U^{-1} \in\sP(\sH)$$
 is a submersion \cite{Wa} and thus $d\Pi_p|_{U=I}:\gu(n)\rightarrow T_p\sP(\sH)$ is surjective. Since $d\Pi_p(B)|_{U=I}=[B,p]$ for every $B\in\gu(n)$, i.e. for every anti-self adjoint operator on $\sH$, the claim is proved.
\\
\end{proof}

\noindent
As a consequence of the above result we have this isomorphism of vector spaces:
\beq \label{ts}
T_p\sP(\sH)\simeq\frac{i\gu(n)}{\sim_p}
\eeq
where the equivalence relation $\sim_p$ is defined by:
\beq
 A_1\sim_p A_2\Leftrightarrow\, [A_1-A_2,p]=0.
\eeq
A symplectic form on $\sP(\sH)$ can be defined for any $\kappa>0$ as:
\beq\label{sp}
\omega_p(u,v):=-i\kappa tr(p[A_v,A_u])\quad\quad u,v\in T_p\sP(\sH)
\eeq
and also a Riemannian metric is defined:
\beq\label{FS}
g_p(u,v)=-\kappa tr(p([A_u,p][A_v,p]+[A_v,p][A_u,p]))\qquad \kappa>0,
\eeq
called {\bf Fubini-Study metric}. Adopting the notation $-i[A,p]=dp$ and applying the polarization identity, (\ref{FS}) can be written in the celebrated form $ds^2=g_p(dp,dp)=2\kappa tr(p(dp)^2)$. These two definitions are well-posed and gives tensors even if the operator $A_v$ associated to $v$ is not unique because the right-hand sides of (\ref{sp}) and (\ref{FS}) are fixed if adding to $A_v$ or $A_u$ operators commuting with p. 
 The symplectic form  is comapatible with Fubini-Study metric w.r.t. to a complex form \cite{AS,BH,DV2,D1}, so $\sP(\sH)$ has a structure of K\"ahler manifold. In particular the symplectic structure can be exploited to construsct a proper Hamiltonian mechanics: For every smooth function $f:\sP(\sH)\rightarrow \bR$ the associated Hamiltonian vector field $X_f$ is the unique vector field satisfying $\omega_p(X_f,\,\cdot\,)=df_p$ and the notion of \emph{Poisson bracket} of a pair of smooth functions $f,g:\sP(\sH)\rightarrow\bR$ can be defined as $\{f,g\}:=\omega(X_f,X_g)$. The formula $[X_f,X_g]=X_{\{f,g\}}$ holds, where the commutator $[\,\,\,,\,\,]$ is the Lie bracket of vector fields.
\\
The main idea of geometric Hamiltonian formulation of QM is associating any quantum observable $A\in i\gu(n)$ to a real scalar function $f_A$ on $\sP(\sH)$ (i.e. a classical-like observable given by a phase space function) in order to obtain a classical-like description of a quantum system on the projective space, in particular representing quantum dynamics via a Hamiltonian vector field w.r.t. the symplectic form (\ref{sp}).
\\
 Imposing several physical requirements \cite{DV2,D1} a general prescription to set up a meaningful classical-like Hamiltonian formulation of a quantum theory is given by the so-called \emph{inverse quantization maps} $\cO$ and $\cS$:
\begin{equation}\label{O}
\cO:i\gu(n)\ni A\mapsto f_A,
\end{equation}
with 
\begin{equation}\label{o}
f_A(p)=\kappa tr(Ap)+\frac{1-\kappa}{n}tr(A)\qquad \kappa>0,
\end{equation}
\noindent
for observables. And about states:
\begin{equation}\label{S}
\cS:\gD(\sH)\ni \sigma\mapsto \rho_\sigma
\end{equation}
with 
\begin{equation}\label{s}
\rho_\sigma(p)=\kappa' tr(\sigma p)+\frac{\kappa-(n+1)}{\kappa}\qquad \kappa>0
\end{equation}

\vspace{0.3cm}
\noindent
where $\kappa'=\frac{n(n+1)}{\kappa}$ and $\gD(\sH)$ denotes the set of density matrices on $\sH$.  Indeed we have a one-parameter family of prescriptions $\{\cO_\kappa,\cS_\kappa\}_{\kappa>0}$ to set up a proper Hamiltonian theory. For $\kappa=1$, the observable $A$ is represented by the standard expectation value function $f_A(p)=tr(Ap)$ as in the Ashtekar-Schilling picture \cite{AS}. While the choice $\kappa=n+1$ is convinient if one needs a simple form of Liouville densities representing quantum states like within geomeric description of entangled states presented in \cite{D}.\\
Using the maps $\cO$ and $\cS$ to obtain classical-like observables and states we can compute quantum expectation values as classical expectation values:
\begin{equation}
\langle A\rangle_\sigma=tr(A\sigma)=\int_{\sP(\sH_n)} f_A \rho_\sigma d\nu
\end{equation}
where $\nu$ is the suitably normalized Liouville measure induced by the symplectic form $\omega$ (see \cite{DV1,DV2} for a complete description of $\nu$). For any $A\in i\gu(n)$, the Hamiltonian vector field associated to $f_A=\cO(A)$ within the symplectic structure induced by (\ref{sp}) is given by \cite{AS,BH,DV2}:
\beq\label{field}
X_{f_A}(p)=-i[A,p]\quad\forall p\in\sP(\sH).
\eeq
A very remarkable result is stated below \cite{AS,BH}:
\begin{theorem}\label{killing}
A vector field on $\sP(\sH)$ is the Hamiltonian vector field of a classical-like observable if and only if it is a $g$-Killing vector field.
\end{theorem}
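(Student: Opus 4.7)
\emph{Plan of proof.}

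The plan is to treat the two implications separately, establishing the easy direction by an explicit flow computation and then closing the other direction by a dimension count.

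For the forward implication, I would fix $A\in i\gu(n)$ and solve the flow equation $\dot p(t)=-i[A,p(t)]$ with initial datum $p(0)=p$. The one-parameter group $U_t=e^{-itA}\in U(n)$ yields $\phi_t(p)=U_t p U_t^{-1}$, which differentiated at $t=0$ recovers exactly $X_{f_A}(p)=-i[A,p]$ as given in \eqref{field}. Thus the flow of $X_{f_A}$ is the conjugation action of the unitary one-parameter subgroup $U_t$. I would then check that the whole $U(n)$-action $\Phi_U:p\mapsto UpU^{-1}$ is an isometry of $g$: using Proposition \ref{tang}, the differential sends a tangent vector $v=-i[A_v,p]$ to $-i[UA_vU^{-1},UpU^{-1}]$, and plugging into \eqref{FS} the cyclicity of the trace absorbs all factors of $U$ and $U^{-1}$, giving $g_{\Phi_U(p)}(d\Phi_U v,d\Phi_U w)=g_p(v,w)$. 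Consequently $\phi_t=\Phi_{U_t}$ is a one-parameter group of isometries, so its infinitesimal generator $X_{f_A}$ is Killing.

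For the reverse implication I would argue by dimension. Let $\mathfrak{K}(\sP(\sH),g)$ denote the Lie algebra of Killing vector fields of the Fubini-Study metric. The forward direction just established gives a linear map $\Psi:i\gu(n)\to\mathfrak{K}(\sP(\sH),g)$, $A\mapsto X_{f_A}$. Its kernel consists precisely of those $A$ with $[A,p]=0$ for every rank-one projector $p$, and a standard argument (a self-adjoint operator commuting with every rank-one projector is a real multiple of the identity) shows $\ker\Psi=\mathbb{R}\, I$. Hence $\Psi$ descends to an injection $i\gu(n)/\mathbb{R}\, I\hookrightarrow\mathfrak{K}(\sP(\sH),g)$, and the left-hand side has real dimension $n^{2}-1$.

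To conclude I would invoke the classical identification of the isometry group of $(\sP(\sH),g)$ with its Fubini-Study metric: the identity component of $\mathrm{Iso}(\sP(\sH),g)$ is the projective unitary group $PU(n)$, whose Lie algebra $\mathfrak{pu}(n)\cong\mathfrak{su}(n)$ has real dimension $n^{2}-1$. Since $\mathfrak{K}(\sP(\sH),g)$ coincides with the Lie algebra of the isometry group, the injection $\Psi$ must be surjective, which finishes the proof. The main obstacle, and the one step I would expect to cite rather than reprove, is this identification of $\mathrm{Iso}(\sP(\sH),g)^{0}$ with $PU(n)$; if one wishes to avoid invoking it, an alternative is to use the Kähler structure already mentioned after \eqref{FS} to characterize Killing fields as the Hamiltonian fields whose potential is an eigenfunction of the Fubini-Study Laplacian of lowest non-zero eigenvalue, and then to identify that eigenspace with the image of the prescription $\cO$ in \eqref{O}--\eqref{o}.
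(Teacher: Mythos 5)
Your proposal is essentially correct, but note that the paper does not actually prove Theorem \ref{killing}: it imports it as a known result from \cite{AS,BH}, so there is no in-paper proof to match. Your forward direction (the flow of $X_{f_A}$ is conjugation by $e^{-itA}$, and unitary conjugation preserves (\ref{FS}) by cyclicity of the trace, so the generator is Killing) is sound. Your reverse direction by dimension count is also sound, and it is in fact the same mechanism the paper itself uses later, in Section 5, to compute $\dim\mathfrak{Kill}(\sP(\sH))=n^2-1$: there the author considers the homomorphism $\phi:\gu(n)\ni T\mapsto X_T$ with kernel the multiples of $iI_n$ and invokes the identification of the isometry group of $\sP(\sH)$ with $PU(n)$ --- exactly the two ingredients you use, only run in the opposite direction (you use $\dim\mathrm{Iso}=n^2-1$ to force surjectivity of $\Psi$, the paper uses surjectivity, i.e.\ Theorem \ref{killing}, to recover the dimension). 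By contrast, the cited sources \cite{AS,BH} argue via the K\"ahler structure: a Killing field whose flow also preserves $\omega$ (equivalently the complex structure) integrates to a one-parameter subgroup of the unitary group, whence its Hamiltonian is an expectation-value function; your route trades that structural argument for two standard external facts, namely Myers--Steenrod (Killing fields on a compact manifold form the Lie algebra of the isometry group, all of them being complete) and $\mathrm{Iso}^0(\sP(\sH),g)=PU(n)$, which you correctly flag as the step to cite and which the paper also takes for granted. Two small points to make explicit if you write this up: the kernel computation needs that a self-adjoint operator commuting with every rank-one projector is a real multiple of $I$, which you state, and the constant term in (\ref{o}) does not affect $X_{f_A}$, so surjectivity onto $\mathfrak{Kill}(\sP(\sH))$ indeed yields a \emph{classical-like observable} $f_A=\cO(A)$ as generator, for any fixed $\kappa>0$.
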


\noindent
Thus one can note the following fact: If $A$ is the Hamiltonian operator $H$ then the Hamilton dynamics given by the field $X_{f_H}$ is equivalent to the Schr\"odinger dynamics given by $H$, i.e. a curve $t\mapsto p(t)\in\sP(\sH)$ satisfies the Schr\"odinger equation ($\hbar =1$):
\beq
\dot p(t)=-i[H,p(t)],
\eeq  
if and only if it satisfies the Hamilton equation:
\beq 
\dot p(t)=X_{f_H}(p(t)).
\eeq

\noindent
In particular, if $\dim\sH>2$ then we have a $C^*$-isomorphism induced by $\cO$ (for any $\kappa>0$)between the $C^*$-algebra of linear operators in $\sH$ and the $C^*$-algebra of certain square $\nu$-integrable functions (so-called \emph{frame functions} \cite{DV1}) on projective space \cite{DV2,D1}. In this case the prescription introduced in (\ref{O})-(\ref{s}) is the \emph{unique} procedure to set up a Hamiltonian formulation of a quantum theory on projective space. Thus we can construct a concrete observable algebra in terms of phase space functions (abandoning self-adjoint operators) obtaining a self-consistent classical-like Hamiltonian formulation of finite-dimensional Quantum Mechanics with a complete characterization of mixed states as normalized probability densities. In terms of such \emph{Liouville densities}, a description of entangled states is given in \cite{D} with focus on quantum information theory.

\vspace{0.5cm}

\section{Complete quantum controllability in terms of classical-like local accessibility}

Since Schr\"odinger dynamics is equivalent to Hamilton dynamics within the setting described in the previous section, so we can state a quantum control problem in terms of Hamiltonian vector fields on $\sP(\sH)$. In this picture,  system (\ref{qc}) (for $\hbar=1$) is equivalent to the nonlinear system:
\beq\label{cqc}
\dot p(t)=X_0(p(t))+\sum_{i=1}^m X_i(p(t)) u_i(t),
\eeq 
where $p(t)=|\psi(t)\rangle\langle\psi(t)|$ and $X_0,...,X_m$ are the Hamiltonian vector fields on $\sP(\sH)$ defined by the Hamiltonian functions $f_{H_0}=\cO(H_0),...,f_{H_m}=\cO(H_m)$:
\beq
f_{H_i}(p)=\kappa tr(H_i p)+\frac{1-\kappa}{n}tr(H_i)\quad\quad \kappa>0.
\eeq
We call (\ref{cqc}) \emph{classical-like system} equivalent to the quantum system (\ref{qc}) because its described in terms of Hamiltonian vector fields on a quantum phase space within the classical-fashioned formalism introduced in  the previous section.\\
Let us point out that the Poisson bracket of $f_A=\cO(A)$ and $f_B=\cO(B)$, with $A,B\in i\gu(n)$, is $\{f_A,f_B\}=f_{-i[A,B]}=\cO(-i[A,B])$, so if $X_A$, $X_B$ and $X_{-i[A,B]}$ are respectively the Hamiltonian vector fields associated to $f_A$, $f_B$ and $f_{-i[A,B]}$ then we have the following remarkable identity chain: 
\beq\label{fields}
[X_A,X_B](p)=X_{-i[A,B]}(p)=-i[-i[A,B],p]=[[-iA,-iB],p] \qquad \forall p\in \sP(\sH),
\eeq
where the commutator in the first member is the Lie bracket of vector fields. Hence we have a Lie algebraic homomorphism between the algebra of Hamiltonian vector fields corresponding to quantum observables (that are the Killing vector fields w.r.t. Fubini-Study metric) and $\gu(n)$, the algebra of anti-selfadjoint operators. More precisely $\gu(n)\ni T\mapsto X$ with $X(p)=[T,p]$ is an epimorphism of Lie algebras.
\\
Let $\cV(\sH)$ be the algebra of smooth vector fields on $\sP(\sH)$, for the system (\ref{cqc}) the accessibility algebra $\cC$ is defined as the subalgebra of $\cV(\sH)$  generated by the fields $\{X_0,...,X_m\}$. We can define accessibility distribution on projective space:
\beq\label{C(p)}
\cC(p)=span\{X(p)|X\in\cC\}\subset T_p\sP(\sH).
\eeq
In order to show that the classical accessibility condition $\cC(p)=T_p\sP(\sH)$ $\forall p\in\sP(\sH)$ for the system (\ref{cqc})  implies operator controllability for the system (\ref{qc}), let us present the following technical result \cite{DD, book}:
\begin{proposition}\label{10}
Let $\mathfrak A$ be a Lie algebra with Lie brackets $[\,\,\, ,\,\,\,]$ and $\{a_1,...,a_r\}\subset \mathfrak A$ be a finite subset. Let $\mathfrak B$ be the algebra generated by $\{a_1,...,a_r\}$ i.e. the smallest subalgebra of $\mathfrak A$ that contains $a_1,...,a_r$. Then elements of $\gB$ turns out to be all linear combinations of repeated Lie brackets:
\beq\label{rlb}
[A_k,[A_{k-1},[\cdots[A_2,A_1]\cdots]]]\qquad k\in\bN,
\eeq
where $A_\alpha\in\{a_1,...,a_r\}$ for any $\alpha\in\{1,...,k\}$.
\end{proposition}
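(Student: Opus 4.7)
Let $\mathfrak{C} \subseteq \mathfrak{A}$ denote the linear span of all expressions of the form $[A_k,[A_{k-1},[\cdots[A_2,A_1]\cdots]]]$ with $k \in \mathbb{N}$ and each $A_\alpha \in \{a_1,\ldots,a_r\}$, where the case $k=1$ is interpreted as just $A_1$. My plan is to establish $\mathfrak{B} = \mathfrak{C}$ by double inclusion. The inclusion $\mathfrak{C} \subseteq \mathfrak{B}$ is immediate, since $\mathfrak{B}$ contains the generators and is closed under Lie brackets and linear combinations, so every iterated bracket of generators, and hence every linear combination of such, lies in $\mathfrak{B}$. For the reverse inclusion, minimality of $\mathfrak{B}$ reduces matters to checking that $\mathfrak{C}$ is itself a Lie subalgebra containing $a_1,\ldots,a_r$: it is a linear subspace by construction and contains each $a_i$ (take $k=1$), so the only nontrivial task is to verify closure under the bracket.

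By bilinearity of $[\,\,,\,\,]$, closure reduces to showing $[x,y] \in \mathfrak{C}$ for two basic left-nested brackets $x = [A_k,[A_{k-1},\ldots,[A_2,A_1]]]$ and $y = [B_l,[B_{l-1},\ldots,[B_2,B_1]]]$. I would proceed by induction on the depth $k$ of $x$. The base case $k=1$ is immediate: $[A_1, y]$ is by definition a left-nested bracket of depth $l+1$ and so belongs to $\mathfrak{C}$. For the inductive step, write $x = [A_k, x']$ with $x'$ of depth $k-1$ and rearrange the Jacobi identity into
\begin{equation*}
[x,y] \;=\; [[A_k, x'], y] \;=\; [A_k, [x', y]] - [x', [A_k, y]].
\end{equation*}
By the inductive hypothesis applied to $x'$, both $[x', y]$ and $[x', [A_k, y]]$ lie in $\mathfrak{C}$ (the latter once one observes that $[A_k, y]$ is left-nested by the base case). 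The remaining term $[A_k, [x', y]]$ is then handled by the simple observation that prepending a generator on the left of any left-nested bracket produces another left-nested bracket, so applying $[A_k, \,\cdot\,]$ to each basic summand of $[x', y]$ yields an element of $\mathfrak{C}$.

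The main obstacle is conceptual rather than computational: a symmetric induction on $k+l$ is tempting but can fail, because Jacobi can produce brackets in which neither side is a single generator, destroying the left-nested form. The correct strategy is to single out the outermost element $A_k$ of $x$, which is a generator by hypothesis, and to exploit Jacobi so as to push it either onto the outside of a new left-nested bracket or into an inner subexpression already covered by the inductive hypothesis. With this discipline, the Jacobi identity is essentially the only tool needed, and the proof reduces to a clean exercise in tracking nesting depths.
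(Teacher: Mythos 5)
Your proof is correct. Note that the paper itself does not prove Proposition \ref{10}: it is quoted as a known technical fact with a citation to \cite{DD,book}, so there is no in-paper argument to compare against; your write-up is essentially the canonical proof found in those references. The structure is sound: $\mathfrak{C}\subseteq\mathfrak{B}$ is immediate, and minimality reduces the converse to showing that the span $\mathfrak{C}$ of left-nested brackets is closed under the bracket, which you do by induction on the depth $k$ of the first factor with the second factor $y$ quantified universally in the inductive hypothesis. That choice of induction variable is exactly the right one: in the Jacobi rearrangement $[[A_k,x'],y]=[A_k,[x',y]]-[x',[A_k,y]]$ the term $[x',[A_k,y]]$ does not decrease the total depth $k+l$, so an induction on $k+l$ would not close, while your induction on $k$ alone handles both terms (the first by prepending the generator $A_k$ to each basic summand of $[x',y]$, the second because $[A_k,y]$ is again a basic left-nested bracket). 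The only presentational point worth making explicit is the convention you already state, namely that the case $k=1$ of \eqref{rlb} is read as the generator $A_1$ itself, so that $\mathfrak{C}$ contains $a_1,\ldots,a_r$; with that understanding the argument is complete.
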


\noindent
The statement of following theorem shows that if a quantum system is described within geometric Hamiltonian formulation so that associated accessibility algebra can be defined then accessibility rank condition (stated like in classical control theory) is satisfied if and only if the considered quantum system is completely controllable. Let us recall that referring to (\ref{qc}) the dynamical Lie algebra $\mL$ of a quantum system is defined as the operator algebra generated by $\{-iH_0,...,-iH_m\}$. Let us stress the control problem can be translated in geometric Hamiltonian formulation considering the Hamitonian vector fields $X_0,...,X_m$ on the projective space generated by the functions $f_{H_0}=\cO(H_0),...,f_{H_m}=\cO(H_m)$ obtaining Hamilton equation (\ref{cqc}).

\begin{theorem}\label{T9}
Consider a quantum system described in the $n$-dimensional Hilbert space $\sH$ whose dynamics is governed by (\ref{qc}). Let $\cC$ be the accessibility algebra of the system within geometric Hamiltonian description on projective space $\sP(\sH)$. The system is completely controllable if and only if the accessibility rank condition is satisfied, i.e.:
\beq\label{rank}
T_p\sP(\sH)=span\{X(p)|X\in\cC\}\qquad\forall p\in\sP(\sH).
\eeq
\end{theorem}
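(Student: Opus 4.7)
The plan is to reduce the statement to the operator-level criterion of Theorem \ref{qcontrol} by exploiting the Lie algebra epimorphism $\Phi:\gu(n)\to\cV(\sH)$ given by $\Phi(T)(p)=[T,p]$, highlighted in the paragraph following the identity chain (\ref{fields}). Since $\Phi(-iH_j)(p)=[-iH_j,p]=X_j(p)$ for every $j=0,\ldots,m$, the map $\Phi$ sends the generating set of $\mL$ onto the generating set of $\cC$; combined with Proposition \ref{10} and the bracket-preserving property of $\Phi$ made explicit in (\ref{fields}), this yields $\Phi(\mL)=\cC$. The rank condition (\ref{rank}) therefore reads $\{[T,p]:T\in\mL\}=T_p\sP(\sH)$ for every $p\in\sP(\sH)$, which by Proposition \ref{tang} is equivalent to $\{[T,p]:T\in\mL\}=\{[T,p]:T\in\gu(n)\}$ for every $p$.

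For the direction $(\Rightarrow)$ I would assume $\mL=\gu(n)$ (operator controllability, via Theorem \ref{qcontrol}) and observe that the displayed equality of reachable tangent sets is then trivially satisfied, giving the rank condition at every $p$.

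For the direction $(\Leftarrow)$ I would assume the rank condition and deduce, via the equivalence above, that for every $T\in\gu(n)$ and every rank-one projector $p$ there exists $T'\in\mL$ with $T-T'\in\gk_p:=\{S\in\gu(n):[S,p]=0\}$, the centralizer of $p$ in $\gu(n)$. Intersecting over all rank-one projectors yields $\bigcap_p\gk_p=i\bR\,I$, since an operator commuting with every rank-one projector must be a scalar multiple of the identity; hence $\gu(n)\subseteq\mL+i\bR\,I$. Combined with the assumption that $\mL$ already contains a generator with nontrivial trace (typically the drift $-iH_0$), this upgrades to $\mL=\gu(n)$, and Theorem \ref{qcontrol} delivers operator controllability.

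The step I expect to be the main obstacle is precisely this last closure: the epimorphism $\Phi$ has nontrivial kernel $i\bR\,I$, so the pointwise tangent-space condition on $\sP(\sH)$ determines $\mL$ only modulo the center of $\gu(n)$. Any clean proof must therefore either invoke a structural hypothesis on the controls ensuring a central generator lies in $\mL$, or reinterpret ``completely controllable'' modulo the physically irrelevant global $U(1)$ phase, so that surjectivity onto the traceless part of $\gu(n)$ already suffices. I would flag this explicitly and choose one of the two closures before invoking Theorem \ref{qcontrol}.
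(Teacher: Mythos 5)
Your route is essentially the paper's: you map the generators $-iH_j$ onto the fields $X_j$ through the bracket-preserving map $T\mapsto[T,\cdot\,]$, use Proposition \ref{10} together with (\ref{fields}) to get $\Phi(\mL)=\cC$, rewrite the rank condition via Proposition \ref{tang} as $\mathrm{span}\{[T,p]:T\in\mL\}=T_p\sP(\sH)$ for all $p$, and then appeal to Theorem \ref{qcontrol}. Up to that point the two arguments coincide step for step.

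The difference lies in the final closure, and here your caution is justified rather than a defect: the paper's proof simply declares ``Hence $\cL=\gu(n)$'' once the spanning condition is reached, whereas, as you observe, the pointwise condition only says $\mL+\gk_p=\gu(n)$ for every $p$, which determines $\mL$ only modulo the center $i\bR I_n$. The obstruction is not hypothetical: if all the $H_j$ are traceless one can have $\mL=\mathfrak{su}(n)$, which satisfies the rank condition (since $[A,p]=[A_0,p]$ for $A_0$ the traceless part of $A$) while $\mL\neq\gu(n)$, so strict operator controllability in the sense of Theorem \ref{qcontrol} fails even though controllability up to a global phase holds. So the ``only if'' direction is fine (as in your $(\Rightarrow)$ step, with the roles of the directions matching the paper's), but the ``if'' direction as literally stated needs exactly one of the repairs you propose: either read ``completely controllable'' modulo the physically irrelevant $U(1)$ phase (equivalently, allow $\mL\in\{\mathfrak{su}(n),\gu(n)\}$), or add a hypothesis guaranteeing a central/non-traceless generator in $\mL$. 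You should commit to one of these closures to make the proof complete, but you have correctly isolated the one point where the paper's own argument is too quick.
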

\begin{proof}
\def\mX{\mathcal X}
\def\mH{\mathcal H}
We prove that condition (\ref{rank}) is satisfied if and only if $\mL=\gu(n)$, where $\mL$ is the dynamical Lie algebra of the quantum system and $\gu(n)$ is the Lie algebra of the unitary group $U(n)$. First of all let us prove that a vector field $X$ on $\sP(\sH)$ belongs to $\cC$ if and only if there is an operator $T\in\mL$ such that $X(p)=[T,p]$ for every $p\in\sP(\sH)$.
\\
Let $X_0,X_1,...,X_m$ be the Hamiltonian vector fields on $\sP(\sH)$ associated to quantum operators $H_0,H_1,...,H_m$, then $\cC$ is generated by  $\{X_0,X_1,...,X_m\}$ by definition. By proposition \ref{10} any element $X$ of $\cC$ can be written as a linear combination $X=\sum_k a_k \mathfrak X_k$ of fields given by repeated Lie brackets:
\beq\label{vf}
\mathfrak X_k =[\mX_k,[\mX_{k-1},[\cdots[\mX_2,\mX_1]\cdots]]]\qquad k\in\bN,
\eeq
where $\mX_i\in\{X_0,...,X_m\}$ for any $i=1,...,k$. Using (\ref{field}) we can write the action of the vector field $\mathfrak X_k$ on $p\in\sP(\sH)$ as:
$$\mathfrak X_k(p)= [\mX_k,[\mX_{k-1},[\cdots[\mX_2,\mX_1]\cdots]]]  (p)=$$
$$\,\,\,\, \,\,\,\qquad\qquad\qquad = [[-i\mH_k,[-i\mH_{k-1},[\cdots[-i\mH_2,-i\mH_1]\cdots],p]]], $$
\\
with $\mH_i\in\{H_0,...,H_m\}$ for any $i=1,...,k$ according to (\ref{fields}). Hence $X(p)=[T,p]$ for every $p\in\sP(\sH)$, where $T$ is some anti-selfadjoint operator given by a linear combination of terms of the following form:
\beq\label{27}
\mathfrak T_k=[-i\mH_k,[-i\mH_{k-1},[\cdots[-i\mH_2,-i\mH_1]\cdots]
\eeq
thus $T$ belongs to the subalgebra $\cL$ by proposition \ref{10}. The converse is true, if $T\in \cL$ then it can be written as a linear combination of terms (\ref{27}) and the associated vector field $X(p)=[T,p]$ is a linear combination of repeated Lie brackets (\ref{vf}), i.e. it belongs to $\cC$.

\noindent
Using this result, the accessibility distribution $\cC(p):=span\{X(p)|X\in\cC\}$ can be written as:
\vspace{0.0cm}
\beq\label{access}
\cC(p)=span\{[T,p]|T\in\cL\}\qquad\forall p\in\sP(\sH).
\eeq
If we require the \emph{accessibility condition} $\cC(p)=T_p\sP(\sH)$ $\forall p\in\sP(\sH)$, then $\cC(p)$ is isomorphic to $i\gu(n)/\sim_p$ for every $p\in\sP(\sH)$ as provided by (\ref{ts}), thus denoting by $[iT]_p$ the equivalence class of the selfadjoint operator $iT$ in  $i\gu(n)/\sim_p$, we have:
\beq
\frac{i\gu(n)}{\sim_p}=span\{[iT]_p|T\in \cL\} \quad \forall p\in\sP(\sH).
\eeq
Hence $\cL=\gu(n)$. Then classical local accessibility condition within the formulation given by (\ref{cqc}) implies complete controllability of the  quantum system (\ref{qc}). Moreover, if the Lie algebra generated by $\{-iH_0,...,-iH_1\}$ is $\gu(n)$ then $C(p)$ span the tangent space in $p$ everywhere on $\sP(\sH)$ as a cosequence of (\ref{access}). Therefore we can conclude that the accessibility rank condition is a necessary and sufficient condition for the complete controllability of the quantum system (\ref{qc}). 
\end{proof}
\noindent
Let us stress the following fact: Even if condition (\ref{access}) is essentially a condition on the tangent space of $\sP(\sH)$ (whose points are identified with pure states) it is not only connected  with pure state controllability but with stronger property of operator controllablity, this non-trivial result is a consequence of the correspondence between tangent vectors in $T_p\sP(\sH)$ and self-adjoint operators on $\sH$, as shown by proposition  \ref{tang}. In other words operator controllability can be completely studied by means of a necessary and sufficient condition on the tangent space by virtue of transitive action of the unitary group on $\sP(\sH)$. On the other hand we focus on the notion of pure state controllability within our framework in the next section showing its weakness w.r.t. complete controllability in terms of accessibility algebra as an algebra of Killing fields.
\\
Theorem \ref{T9} allows to characterize the dynamical Lie algebra $\cL$ associated to a quantum control system in terms of the accessibility algebra defined within geometric Hamiltonian formulation.
Applying the well-known result of theorem \ref{killing}, we can state an additional characterization of quantum controllability in terms of accessibility algebra:  There is complete controllability if and only if $\cC$ is the Lie subalgebra of $\cV(\sH)$ made by the Hamiltonian vector fields on $\sP(\sH)$ corresponding to quantum observables that are the $g$-Killing vector fields ($g$ is Fubini-Study metric), we denote such algebra as $\mathfrak{Kill}(\sP(\sH))$. Then let us state:
\begin{corollary}\label{t9}
A quantum system is completely controllable  if and only if the accessibility algebra of the correspondent classical-like system is $\mathfrak{Kill}(\sP(\sH))$.
\end{corollary}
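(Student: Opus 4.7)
\vspace{0.2cm}
\noindent\textbf{Proof proposal.} The plan is to combine Theorem \ref{T9} with Theorem \ref{killing}, which identifies the Hamiltonian vector fields on $\sP(\sH)$ associated with quantum observables as precisely the Killing fields of the Fubini--Study metric.

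I will argue in two directions. For the forward implication, assume the system is completely controllable. Theorem \ref{T9} then supplies the accessibility rank condition, and the proof of that theorem establishes the structural description $X \in \cC \iff X(p) = [T, p]$ for some $T$ in the dynamical Lie algebra $\cL$, together with $\cL = \gu(n)$. Since $X(p) = [T, p] = -i[iT, p]$ is the Hamiltonian vector field of the observable $iT \in i\gu(n)$, Theorem \ref{killing} asserts such vector fields are exactly the Fubini--Study Killing fields; letting $T$ range over all of $\gu(n) = \cL$ we obtain $\cC = \mathfrak{Kill}(\sP(\sH))$.

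For the converse, assume $\cC = \mathfrak{Kill}(\sP(\sH))$. Then for every self-adjoint $A \in i\gu(n)$ the Hamiltonian vector field $X_{f_A}$ lies in $\cC$, so the accessibility distribution at any $p$ contains every vector $X_{f_A}(p) = -i[A, p]$. By Proposition \ref{tang} these already span $T_p\sP(\sH)$, hence the accessibility rank condition (\ref{rank}) holds at every $p$, and Theorem \ref{T9} delivers complete controllability.

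The main subtlety, more bookkeeping than genuine obstacle, is keeping the Lie-algebraic epimorphism $\gu(n) \ni T \mapsto (p \mapsto [T, p])$ highlighted in (\ref{fields}) firmly in view: it intertwines the dynamical Lie algebra $\cL \subseteq \gu(n)$, the accessibility algebra $\cC$ of vector fields on $\sP(\sH)$, and the Killing algebra $\mathfrak{Kill}(\sP(\sH))$, sending $\cL$ onto $\cC$ and $\gu(n)$ onto $\mathfrak{Kill}(\sP(\sH))$. Once this dictionary is fixed, the Corollary is essentially a geometric restatement of Theorem \ref{T9} in the language of Killing vector fields on the quantum phase space.
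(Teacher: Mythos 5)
Your proposal is correct and follows essentially the same route as the paper: it combines Theorem \ref{T9} with Theorem \ref{killing} through the epimorphism $\gu(n)\ni T\mapsto (p\mapsto[T,p])$, which is exactly how the paper justifies the corollary. The only cosmetic difference is that you spell out the converse via Proposition \ref{tang} and the rank condition (\ref{rank}), a detail the paper leaves implicit.
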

\noindent

\vspace{0.0cm}

\section{Pure state controllability in terms of classical-like accessibility algebra}
\noindent
In conclusion, we focus on the notion of  \emph{pure state controllability} (definition \ref{defpsc}), that is a property weaker than complete controllability. By definition, pure state controllability of a quantum system is equivalent to the following fact: The Lie group $e^{\cL}$ (where $\cL$ is the dynamical algebra) has a transitive action on the unit sphere $\bS(\sH)$. As proved in \cite{AD}, such a condition is equivalent to one of the following facts: $\cL$ is isomorphic to $\mathfrak {sp}(\frac{n}{2})$ or to $\mathfrak{su}(n)$, for $n$ even, or to $\mathfrak{su}(n)$, for $n$ odd. Where  $\mathfrak {sp}(\frac{n}{2})$ is the Lie algebra of the symplectic group and $\mathfrak{su}(n)$ is the Lie algebra of $SU(n)$. However, a more practical pure state controllability criterion is also introduced in \cite{AD}: Consider a pure state $P\in\gD_p(\sH)$ (which can be identified as a point $p\in\sP(\sH)$), let $\mathfrak C_P$ the centralizer of $iP$ in $\gu(n)$ (i.e. the subalgebra of $\gu(n)$ made by the operators $T\in\gu(n)$ such that $[T,iP]=0$). A quantum system with dynamical algebra $\cL$ is pure state controllable if and only if: 
\beq\label{standard}
\dim\cL-\dim(\cL\cap \mathfrak C_P)=2n-2,
\eeq
independently of the choice of $P$ because the transitive action of unitary group on the set of pure states. This test is really convenient when performing calculations in the Hilbertian basis where $P=$diag$(1,0,\dots,0)$. Let us show there is an analogous condition in terms of accessibility algebra that can be proved exploiting the geometric structure of $\sP(\sH)$ and the characterization of quantum observables by means of Hamiltonian vector fields. 
\\
The Lie algebra of Killing vector fields on a compact (pseudo-)Riemannian manifold corresponds to the Lie algebra of the isometry group of the manifold. In the case of $\sP(\sH)$ the isomtery group is the projective unitary group $PU(n)$ then we have $\dim(\mathfrak{Kill}(\sP(\sH)))=n^2-1$. One can use another simple argument to calculate the dimension of Lie algebra of Killing vector fields on $\sP(\sH)$ as a Riemannian manifold: We can consider the Lie algebraic homomorphism $\phi:\gu(n)\ni T\mapsto X_T\in \cV(\sH)$ where $X_T(p):=[T,p]$ for any $p\in\sP(\sH)$, let $f_{iT}$ be the classical-like observable associated to $iT$ for any $T\in\gu(n)$ (then $X_T$ is the Hamiltonian vector field of $f_{iT}$), since $\sP(\sH)$ is connected the map $f_{iT}\mapsto X_T$ is a Lie algebraic homomorphism whose kernel is given by the constant functions. Hence the kernel of  $\phi$ is the 1-dimensional ideal $\mathfrak I$ of $\gu(n)$ given by the multiples of $iI_n$. Thus the dimension of the range of $\phi$ (that corresponds to $\mathfrak{Kill}(\sP(\sH)$ in view of theorem \ref{killing}) is $n^2-1$.\\
 The following theorem establishes a necessary and sufficient condition for pure state controllability in terms of the accessibility algebra. It turns out to be the analogous of the standard one (\ref{standard}) where the dynamical Lie algebra $\mL$ is replaced by the accessibility algebra of the classical-like formulation and the centralizer $\mathfrak C_P$ is replaced by a subalgebra of vanishing Killing vector fields.
\begin{theorem}\label{dim}
A quantum system described within the geometric Hamiltonian picture is pure state controllable if and only if
\beq\label{psc}
\dim(\cC)-\dim(\cC\cap \mathfrak A_p)=2n-2
\eeq
for a fixed $p\in\sP(\sH)$. Where $\cC$ is the classical accessibility algebra and $\mathfrak A_p$ is the subalgebra of $\mathfrak{Kill}(\sP(\sH))$ of vector fields that satisfies $X(p)=0$.
\end{theorem}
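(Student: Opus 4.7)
The plan is to reduce the new condition (\ref{psc}) to the classical Albertini--D'Alessandro criterion (\ref{standard}) via the Lie algebra epimorphism
\[
\phi : \gu(n) \longrightarrow \mathfrak{Kill}(\sP(\sH)), \qquad \phi(T)(q) = [T,q],
\]
whose kernel is the one-dimensional ideal $\mathfrak I = \bR \,iI_n$ identified just before the statement. Two structural facts already extracted in the proof of theorem \ref{T9} are the backbone of the argument. First, the image of the dynamical Lie algebra under $\phi$ is precisely the classical-like accessibility algebra, so $\phi(\mL) = \cC$. Second, the centralizer condition $T \in \mathfrak C_P$ (with $P$ the rank-one projector representing $p$) is equivalent to $[T,p]=0$, i.e.\ to $\phi(T) \in \mathfrak A_p$; this yields $\phi^{-1}(\mathfrak A_p)=\mathfrak C_P$, and hence, by restricting $\phi$ to $\mL$, the identity $\phi(\mL \cap \mathfrak C_P)=\cC \cap \mathfrak A_p$.

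With these identifications in hand the argument becomes a direct dimension count. Applying the rank-nullity formula to $\phi|_{\mL}$ and to $\phi|_{\mL \cap \mathfrak C_P}$ I would obtain $\dim\cC = \dim\mL - \dim(\mL\cap\mathfrak I)$ and $\dim(\cC\cap\mathfrak A_p) = \dim(\mL\cap\mathfrak C_P) - \dim(\mL\cap\mathfrak I)$; the two kernels coincide because $\mathfrak I\subseteq\mathfrak C_P$ (the scalar operator $iI_n$ commutes with every projector). Subtracting, the kernel contributions cancel and
\[
\dim\cC - \dim(\cC\cap\mathfrak A_p) \;=\; \dim\mL - \dim(\mL\cap\mathfrak C_P),
\]
so condition (\ref{psc}) holds if and only if the classical criterion (\ref{standard}) holds, which by \cite{AD} is in turn equivalent to pure state controllability.

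The independence of (\ref{psc}) from the chosen $p$ --- built into (\ref{standard}) via the transitivity of $U(n)$ on pure states --- passes to the geometric side because the $U(n)$-action on $\sP(\sH)$ lifts, through conjugation on $\gu(n)$, to Lie algebra automorphisms of $\mathfrak{Kill}(\sP(\sH))$ that send $\mathfrak A_p$ onto $\mathfrak A_{UpU^{-1}}$ and preserve $\cC$. The only delicate point is the cancellation of the kernel contributions, and that rests entirely on the containment $\mathfrak I \subseteq \mathfrak C_P$; once this is recorded, the theorem is a transparent consequence of theorem \ref{T9} together with the cited criterion from \cite{AD}.
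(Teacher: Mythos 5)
Your argument is correct, but it proves the theorem by a genuinely different route than the paper. You reduce (\ref{psc}) to the Albertini--D'Alessandro criterion (\ref{standard}): using the epimorphism $\phi:\gu(n)\to\mathfrak{Kill}(\sP(\sH))$ with kernel $\mathfrak I=\bR\, iI_n$, the identifications $\phi(\mL)=\cC$ and $\phi(\mL\cap\mathfrak C_P)=\cC\cap\mathfrak A_p$ (both legitimately extracted from the proof of theorem \ref{T9} plus the observation that $\phi(T)\in\mathfrak A_p$ iff $[T,p]=0$ iff $T\in\mathfrak C_P$), and rank--nullity with the containment $\mathfrak I\subseteq\mathfrak C_P$ to cancel the kernel contributions, so that (\ref{psc}) holds iff (\ref{standard}) holds, which is equivalent to pure state controllability by \cite{AD}. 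The paper instead argues intrinsically: $\cC$, being an algebra of Killing fields, is the Lie algebra of a subgroup $\cG$ of the isometry group, $\cC\cap\mathfrak A_p$ is the isotropy subalgebra at $p$, so $\dim\cC-\dim(\cC\cap\mathfrak A_p)$ equals the dimension of the orbit $\cG.p$; the condition $=2n-2$ then says the orbit is all of $\sP(\sH)$, i.e.\ $\cG$ acts transitively, which is pure state controllability. Your version is cleaner and purely algebraic, and in fact coincides with the dimension computation the paper carries out \emph{after} the theorem in the remark showing (\ref{standard}) and (\ref{psc}) are equivalent; what it costs is the paper's stated purpose for theorem \ref{dim}, namely that it furnishes an \emph{alternative} derivation of the known criterion (\ref{standard}) (Theorem 7 of \cite{AD}): since your proof invokes (\ref{standard}) itself, it cannot serve that role without circularity, whereas the paper's orbit argument only uses the characterization of pure state controllability as transitivity on states. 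Conversely, your route avoids the slightly delicate step in the paper's proof where a full-dimensional orbit is asserted to be all of $\sP(\sH)$ (which needs compactness/openness-plus-connectedness to be airtight), and it makes the independence of the choice of $p$ completely transparent via conjugation.
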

\begin{proof}
We have seen that $\cC$ is a subalgebra of the Lie algebra of Killing vector fields in $\sP(\sH)$ then it is isomorphic to the Lie algebra of a one-parameter subgroup $\cG$ of the isometry group. The subalgebra $\cC\cap \mathfrak A_p$ of vanishing Killing vector fields in $p$ is isomorphic to the Lie algebra of a subgroup of the isotropy group of $p\in\sP(\sH)$ which belongs to $\cG$. Let us denote such isotropy subgroup with $\tilde\cG_p$. Since $\cG/\tilde\cG_p$ is diffeomorphic to the orbit $\cG.p$ of $p$ we have $\dim(\cC)-\dim(\cC\cap \mathfrak A_p)$ coincides with the dimension of $\cG.p$ that is an immersed submanifold of $\sP(\sH)$.
\\
If $\dim(\cC)-\dim(\cC\cap \mathfrak A_p)=\dim\sP(\sH)=2n-2$ then the orbit $\cG.p$ is $\sP(\sH)$ itself, hence  the action of $\cG$ is transitive on $\sP(\sH)$ and $p$ can be map to any point $p'\in\sP(\sH)$ under the action of a transformation $g\in\cG$, i.e. $p'=g(p)=U pU^*$ for some unitary operaor $U\in U(n)$. This fact is equivalent to pure state controllability, i.e. for any initial state and any final state one can find a time evolution from the first one to the second one. 
\\
The converse is true, if the considered quantum system is pure state controllable then there exists a one-parameter group of isometries $\cG$ with transistive action on $\sP(\sH)$ whose Lie algebra corresponds to the accessibility algebra $\cC$; Since $\sP(\sH)$ is a homogeneous space for $\cG$, the relation (\ref{psc}) for Lie algebras holds.   
\end{proof}

\vspace{0.5cm}

\noindent
As a remark note that theorem \ref{dim} represents an alternative way to prove the validity of the known pure state controllability criterion (\ref{standard}) (Theorem 7 in \cite{AD}), let us prove below that (\ref{standard}) is equivalent to (\ref{psc}).

\noindent
Consider the dynamical Lie algebra $\cL$ of a quantum system, applying theorem \ref{T9}, we have that $T\in\cL$ if and only if the vector field given by $X(p)=[T,p]$ belongs to the accessibility algebra $\cC$ (within the geometric Hamiltonian picture). In particular, if $T\in\cL$ is also an element of the centralizer of the operator $P\in\gD_p(\sH)$ identified by the point $p\in\sP(\sH)$ we have that associated Hamiltonian field $X\in\cC$ gives the null tangent vector in $p\in\sP(\sH)$. Viceversa is obviously true. For any Lie subalgebra $\mathfrak C$ of $\gu(n)$, the subalgebra $\phi(\mathfrak C)$ of $\mathfrak{Kill}(\sP(\sH))$ has dimension $\dim \phi(\mathfrak C)=(\dim\mathfrak C)-1$ if $\mathfrak I \subset \mathfrak C$ or $\dim\phi(\mathfrak C)=\dim\mathfrak C$ if $\mathfrak I \not\subset \mathfrak C$, where $\mathfrak I$ is the ideal formed by the multiples of $i I_n$. Note that if $\mathfrak I$ is a subalgebra of $\mL$ then $\mathfrak I$ is a subalgebra of $\mL\cap\mathfrak C_P$ because $\mathfrak I$ is an ideal of the centralizer.

\noindent
 Since $\cC=\phi(\mL)$ and $\cC\cap \mathfrak A_p=\phi(\mL\cap \mathfrak C_P)$ we have $\dim(\cC)-\dim(\cC\cap \mathfrak A_p)=2n-2$ if and only if $\dim\cL-\dim(\cL\cap \mathfrak C_P)=2n-2$ that is the standard condition for pure state controllability.

\noindent

\vspace{0.0cm}
\section{Conclusion and perspectives}
We can give the following interpretation of the result presented in section 4: since operator controllability of a quantum system can be expressed in terms of classical local accessibility then we conjecture that classical control theory can be directly applied to quantum control in several contexts  within  geometric Hamiltonian picture. In particular we have shown that accessibility algebra (as defined in classical control theory) plays a central r\^ole in geometric classical-like framework to characterize complete controllability and pure state controllability that are the main notions of controllability in quantum control theory.    
\\
In many concrete situations the quantum system to be controlled is an open quantum system and not an isolated one, then dynamics cannot be simply described by a group of unitary operators, in this case the study of controllability is more difficult and represents an extremely interesting direction of scientific investigation. In this regard a way to adopt a classical-like point of view (as suggested in the present work) is represented by \cite{D} where states of composite quantum systems and quantum entanglement are described in terms of Liouville densities.
\\
From physical point of view a remarkable open issue is the determination of a concrete class of control problems for which the presented geometric approach is definitely more convenient than the known methods of quantum control. An example of promising direction of investigation in this sense could be indirect controllability: A target system $S$ is coupled with an auxiliary system $A$ on which control is performed. Such a composite system can be completely described in Hamiltonian geometric terms \cite{D} where notion of partial trace is replaced by \emph{partial integral} w.r.t. a suitable Liouville form and a measure of quantum correlation in $S+A$ can be directly computed as a $L^2$-distance between Liouville densities. Hence this work provides a machinery which could lead to a new approach to indirect controllability.
\\
Several tools of classical optimal control theory, like Pontryagin minimum principle and variational methods, have been applied to optimal quantum control \cite{DP}; working in a geometric classical-like framework could be a fruitful strategy to adapt other classical tools to optimal control theory.

\vspace{0.5cm}

\section*{Acknowledgements}
I am grateful to D. D'Alessandro and V. Moretti for useful discussions and suggestions.

\vspace{0.5cm}

\end{document}